\documentclass[11pt]{article}
\usepackage[T1]{fontenc}
\usepackage{lmodern}
\usepackage[margin=1.1in]{geometry}
\usepackage{amsmath,amssymb,amsthm}
\usepackage{booktabs}
\usepackage{graphicx}
\usepackage[hidelinks]{hyperref}
\hypersetup{pdftitle={Arranging circles of radii 1,2,...,n around a central circle: a Supnick TSP and certified finite optima},pdfauthor={Maurizio Falconi}}

\newtheorem{theorem}{Theorem}
\newtheorem{lemma}[theorem]{Lemma}
\newtheorem{proposition}[theorem]{Proposition}

\theoremstyle{definition}
\newtheorem{definition}[theorem]{Definition}
\theoremstyle{remark}
\newtheorem{remark}[theorem]{Remark}

\DeclareMathOperator{\asin}{arcsin}

\title{Arranging circles of radii $1,2,\dots,n$ around a central circle:\\
a Supnick TSP and certified finite optima}
\author{Maurizio Falconi\thanks{Independent researcher, Parma, Italy. \texttt{maurizio.falconi47@gmail.com}}}
\date{September 11, 2026\\\small Corrective v2}

\begin{document}
\maketitle
\begin{center}
\small\textbf{Correction of arXiv:2607.28654}
\end{center}
This corrective version preserves the original finite contribution and corrects
its superseded asymptotic conjectures. The standalone sequel is available as
arXiv:2609.13630 \cite{sequel}.

\begin{abstract}
We study a discrete-geometric optimization problem: circles of radii $1,2,\dots,n$ are all externally tangent to a central circle, and the central radius $R$ is minimized over cyclic orders of the surrounding circles. We prove that the chain-ordering component is governed by a fixed Supnick/anti-Monge traveling-salesman order. For every $R$, the angular-separation matrix is symmetric anti-Monge, so Supnick's theorem gives one minimizing cyclic order, independent of $R$. This proves the conjectured ``pyramid'' order optimal whenever the corresponding chain necklace is geometrically realizable, and gives an unconditional lower bound in all cases.

Full geometric feasibility can fail because non-adjacent circle constraints are not captured by the chain equation; from $n=8$ the smallest circle can become a floating circle tangent only to the central circle. We formulate the full problem as a circular system of pairwise angular constraints, equivalently a simple temporal network, and certify global optima for $3\le n\le14$ using branch-and-bound plus an independent 50-digit verifier. Continuation of the floating-circle pattern beyond that range remains conjectural. The v1 conjecture $R^\ast(n)=n^2/8(1+o(1))$ has been disproved by subsequent work \cite{sequel}; this correction does not invalidate the finite results. The repository contains the saved certificate artifacts, verifier, and reproducibility commands.
\end{abstract}

\section{Introduction}

The following question was asked by the user Dan on Mathematics Stack Exchange in January 2023 \cite{mse} and appears as question 295 in the UQ benchmark of unsolved Stack Exchange questions \cite{uq}.

\begin{quote}
Non-overlapping circles of radius $1,2,3,\dots,n$ are all externally tangent to a middle circle. How should we arrange the surrounding circles in order to minimize the middle circle's radius $R$?
\end{quote}

Dan conjectured a specific ``pyramid-fashion'' arrangement and observed two phenomena that make the problem subtle: (i) the natural closure equation
\begin{equation}\label{eq:closure}
\sum_{k=1}^{n}\arccos\!\Big(1-\frac{2r_kr_{k+1}}{(R+r_k)(R+r_{k+1})}\Big)=2\pi
\qquad (r_{n+1}=r_1)
\end{equation}
treats consecutive circles as mutually tangent, yet (ii) the reported finite optima from $n=8$ admit circles tangent only to the central circle. This is an existence statement for the reported optima, not a conclusion about every optimal configuration or every larger $n$. A comment by Rei Henigman \cite{mse} verified the conjecture for $n=10$ by enumerating permutations of $\{2,\dots,10\}$, manually excluding circle $1$ from the chain.

This paper addresses the problem on three levels, keeping the epistemic status of each result explicit.

\paragraph{1. The ordering problem is a Supnick TSP.}
Writing $\theta_R(a,b)$ for the minimum angular separation, at central radius $R$, between the centers of tangent circles of radii $a$ and $b$ (Section~\ref{sec:model}), we show that the matrix $\big(\theta_R(r_i,r_j)\big)_{i,j}$ is symmetric \emph{anti-Monge} for every $R>0$ (Lemma~\ref{lem:supermodular}). Minimizing total angle over cyclic orders is therefore a \emph{maximum} traveling salesman problem on a Supnick matrix, which by a classical theorem of Supnick \cite{supnick} (see also the survey \cite{bdvvw}) is solved by one fixed tour independent of the matrix entries:
\[
\sigma^\ast \;=\; \langle\, 1,\; n-1,\; 3,\; n-3,\; 5,\; n-5,\;\dots,\; n-2,\; 2,\; n \,\rangle .
\]
A short self-consistency argument (the same order is optimal at \emph{every} $R$) transfers optimality from fixed $R$ to the variable-$R$ problem (Theorem~\ref{thm:A}). The order $\sigma^\ast$ coincides with Dan's conjectured pyramid arrangement. In particular $R_{\mathrm{chain}}(\sigma^\ast)$, defined by \eqref{eq:closure}, is an unconditional lower bound for $R^\ast(n)$, and it is the exact optimum whenever the $\sigma^\ast$-necklace is geometrically realizable.

\paragraph{2. Realizability breaks in a finite certified range.}
The closure equation \eqref{eq:closure} ignores non-adjacent pairs. For $n\ge 8$ the configuration it describes for $\sigma^\ast$ is \emph{not realizable}: circle $1$, seated between circles $n$ and $n-1$, no longer keeps them apart, i.e.
\(\theta_R(n,1)+\theta_R(1,n-1) < \theta_R(n,n-1)\).
The reported certified optima for $8\le n\le14$ admit circle $1$ as a \emph{floating} circle, tangent to the central circle only. This finite statement does not follow from the seam obstruction alone. We model the full problem as a circular system of $\binom n2$ pairwise angular constraints --- equivalently a simple temporal network (STN) --- and give a certified algorithm for it (Section~\ref{sec:algo}). Exhaustive certified optimization for $3\le n\le 14$ (Section~\ref{sec:results}) reveals the reported finite regimes below, described by two computable quantities of the Supnick tour on $\{2,\dots,n\}$: its geometric validity and the largest Descartes pocket it opens.

\begin{center}
\small
\begin{tabular}{llll}
\toprule
regime & $n$ & reported floating set & reported chain \\
\midrule
I (full necklace) & $3\le n\le 7$ & $\varnothing$ & $\sigma^\ast$ on $\{1,\dots,n\}$ \\
II (paid floater) & $n=8,9$ & $\{1\}$ & distorted tour on $\{2,\dots,n\}$ \\
III (free floater) & $n=10,11,12$ & $\{1\}$ & $\sigma^\ast$ on $\{2,\dots,n\}$ \\
IV (second breakdown) & $n=13$ & $\{1\}$ & constrained tour on $\{2,\dots,n\}$ \\
II$'$ (cascade, level 2) & $n=14$ & $\{1,2\}$ & constrained tour on $\{3,\dots,n\}$ \\
\bottomrule
\end{tabular}
\end{center}

\begin{figure}[t]\centering\includegraphics[width=.55\textwidth]{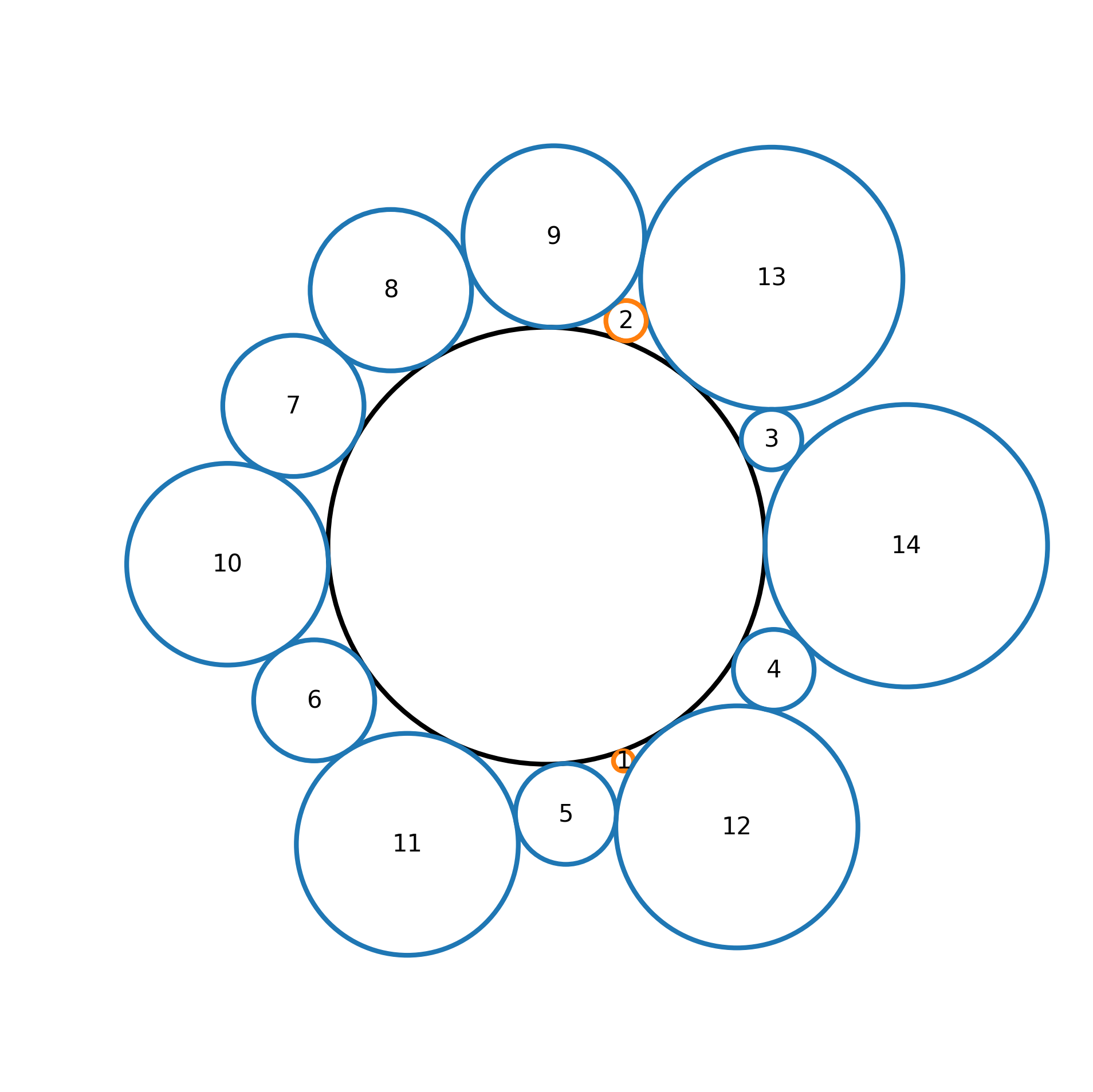}\caption{A reported certified optimum for $n=14$; circles $1,2$ admit strictly floating placements at the reported radius and are highlighted. A plotted witness need not itself display every available strict slack.}\label{fig:n14}\end{figure}

\paragraph{3. Quantitative results.}
We report certified optima for $3\le n\le14$, with their high-precision reconstructions and reported binding/floating structure (Table~\ref{tab:main}); the chain-radius-maximizing arrangement, which is the \emph{minimum}-tour counterpart $\langle 1,3,5,\dots,n,\dots,4,2\rangle$ of Supnick's theorem (Proposition~\ref{prop:worst}); and the correction of the disproved v1 asymptotic conjecture (Section~\ref{sec:asymptotics}). Global certification retains the $10^{-10}$ radius guard; 50-digit arithmetic does not mean 50-digit global certification.

\paragraph{Epistemic ledger.}
The evidence classes are explicit. \emph{Proved}: Lemmas~\ref{lem:angular} and~\ref{lem:supermodular}, Theorem~\ref{thm:A}, and the chain-level statement of Proposition~\ref{prop:worst}. \emph{Computer-certified finite results}: Table~\ref{tab:main}, supported by the exhaustive search record and the numerical guards in Section~\ref{sec:algo}. The pocket values, finite realizability diagnostics and complementary-tour values are \emph{numerical fixed-order evidence}; their printed precision does not expand the global certificate. \emph{Heuristic or unresolved}: Table~\ref{tab:heur} and the global questions in Remark~\ref{conj:cascade}. The earlier coefficient and deficit conjectures are \emph{disproved}, as recorded in Remark~\ref{conj:asym}; subsequent exact fixed-order results are cited separately.

\paragraph{Related work.}
Supnick's fixed-tour theorem \cite{supnick} and the Monge/Supnick taxonomy of polynomially solvable TSPs are surveyed in \cite{bdvvw}. Circle packing literature includes container problems (packing circles \emph{inside} regions, see e.g.\ \cite{hifi}); the present paper concerns unequal circles externally tangent to a central one. The tangency mechanics use Descartes' Circle Theorem \cite{mathews-zymaris-spinors}. The asymptotic sequel \cite{sequel} discusses the related line/shelf model. No exhaustive novelty or priority claim is required here.

\section{The model}\label{sec:model}

Let the central circle have radius $R>0$ and center $O$. A surrounding circle of radius $r$ tangent externally to it has its center at distance $R+r$ from $O$; its position is a single angle $\varphi$. The whole configuration is a vector of angles.

\begin{lemma}[Exact angular reformulation]\label{lem:angular}
Circles of radii $a,b$, both externally tangent to the central circle, with angular positions separated by $\Delta\in[0,\pi]$, are non-overlapping if and only if
\[
\Delta \;\ge\; \theta_R(a,b)\;:=\;2\asin\sqrt{\frac{ab}{(R+a)(R+b)}}.
\]
Moreover $\theta_R(a,b)\in(0,\pi)$ always, $\theta_R$ is symmetric, strictly decreasing in $R$, and strictly increasing in each radius.
\end{lemma}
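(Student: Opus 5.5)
The plan is to reduce non-overlap to a distance condition between centers and then solve the resulting inequality for $\Delta$. Place $O$ at the origin. The centers $A,B$ lie at distances $R+a$ and $R+b$ from $O$, with angle $\Delta$ between them. The two circles are non-overlapping, meaning their interiors are disjoint (tangency allowed), if and only if $|AB|\ge a+b$. By the law of cosines,
\[
|AB|^2=(R+a)^2+(R+b)^2-2(R+a)(R+b)\cos\Delta .
\]
We substitute $\cos\Delta=1-2\sin^2(\Delta/2)$ and use $(R+a)^2+(R+b)^2-2(R+a)(R+b)=(a-b)^2$. This gives
\[
|AB|^2=(a-b)^2+4(R+a)(R+b)\sin^2(\Delta/2).
\]
The condition $|AB|^2\ge(a+b)^2$ then becomes $4(R+a)(R+b)\sin^2(\Delta/2)\ge 4ab$, that is,
\[
\sin(\Delta/2)\ge\sqrt{ab/((R+a)(R+b))}.
\]
Here $\sin(\Delta/2)\ge0$ because $\Delta\in[0,\pi]$.

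The one point needing care is inverting the sine. For $\Delta\in[0,\pi]$ we have $\Delta/2\in[0,\pi/2]$, where $\sin$ is a strictly increasing bijection onto $[0,1]$. Since $R>0$ gives $ab<(R+a)(R+b)$, the quantity $s:=\sqrt{ab/((R+a)(R+b))}$ lies in $(0,1)$. Hence $\sin(\Delta/2)\ge s$ holds if and only if $\Delta/2\ge\asin s$, i.e. $\Delta\ge\theta_R(a,b)$. This proves the equivalence. The same bound $s\in(0,1)$ shows $\asin s\in(0,\pi/2)$, so $\theta_R(a,b)\in(0,\pi)$.

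The remaining properties follow from monotonicity of $s^2=\frac{a}{R+a}\cdot\frac{b}{R+b}$, because $\asin$ and $\sqrt{\cdot}$ are strictly increasing. Symmetry is immediate since the formula is symmetric in $a$ and $b$. Each factor $x/(R+x)$ is strictly decreasing in $R>0$ for fixed $x>0$, so $\theta_R$ is strictly decreasing in $R$. Each factor is strictly increasing in $x$, with derivative $R/(R+x)^2>0$, so $\theta_R$ is strictly increasing in each radius. No step is a genuine obstacle: the main care points are the half-angle identity that isolates $(a-b)^2$ and the restriction $\Delta\le\pi$, which makes the inversion of $\sin$ legitimate.
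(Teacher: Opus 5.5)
Your proof is correct and follows essentially the paper's argument: law of cosines, then turning $|AB|\ge a+b$ into a lower bound on $\Delta$, with $s=\sqrt{ab/((R+a)(R+b))}\in(0,1)$ coming from $R>0$. The only difference is the order of steps. The paper rearranges to $\cos\Delta\le 1-\frac{2ab}{(R+a)(R+b)}$ and invokes $\arccos(1-2s^2)=2\asin s$, whereas you apply the half-angle identity first and then invert $\sin$ on $[0,\pi/2]$.
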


\begin{proof}
By the law of cosines the squared distance between the centers is $(R+a)^2+(R+b)^2-2(R+a)(R+b)\cos\Delta$, and non-overlap means this is at least $(a+b)^2$. Rearranging gives $\cos\Delta\le 1-\frac{2ab}{(R+a)(R+b)}$, i.e.\ $\Delta\ge\theta_R(a,b)$ with $\theta_R$ as stated, using $\arccos(1-2s^2)=2\asin s$ for $s\in[0,1]$. The argument of the square root is strictly less than $1$ since $(R+a)(R+b)-ab=R(R+a+b)>0$, whence $\theta_R<\pi$; the monotonicity claims are immediate from the formula.
\end{proof}

Thus the problem is: minimize $R$ such that there exist angles $\varphi_1,\dots,\varphi_n$ with circular separations $\ge\theta_R(r_i,r_j)$ for \emph{all} pairs. For a fixed cyclic order this is a circular system of difference constraints --- a simple temporal network --- and feasibility at fixed $R$ is decidable by shortest paths (Section~\ref{sec:algo}). Two derived quantities recur:

\begin{definition}
For a cyclic order $\sigma$ of at least three positive radii, $R_{\mathrm{chain}}(\sigma)$ is the unique solution of $\sum_i \theta_R(\sigma_i,\sigma_{i+1})=2\pi$. (Existence and uniqueness: the left side tends to $|\sigma|\,\pi>2\pi$ as $R\to0^+$, to $0$ as $R\to\infty$, and is strictly decreasing.) For a feasible pair (order, $R$), summing the adjacent gaps shows $\sum_i\theta_R(\sigma_i,\sigma_{i+1})\le2\pi$ and hence $R\ge R_{\mathrm{chain}}(\sigma)$: the chain value is an unconditional per-order lower bound.
\end{definition}

\begin{definition}[Floating circle]
A circle $c$ is \emph{floating} at a global optimum if there exists an optimal configuration in which $c$ is tangent only to the central circle (every pairwise constraint involving $c$ is strictly slack).
\end{definition}

\begin{definition}[Descartes pocket]
For adjacent, mutually tangent circles $a,b$ on the necklace at central radius $R$, the \emph{pocket} is the largest circle externally tangent to the central circle and to both: by Descartes' Circle Theorem its curvature is
$k_{\mathrm{pocket}}=\frac1R+\frac1a+\frac1b+2\sqrt{\frac1{Ra}+\frac1{ab}+\frac1{Rb}}$.
A circle of radius $\rho$ fits in this pocket without overlap with its two boundary circles iff $\rho\le 1/k_{\mathrm{pocket}}$, equivalently $\theta_R(a,\rho)+\theta_R(\rho,b)\le\theta_R(a,b)$. Equality makes it tangent to both boundaries; strict inequality permits slack against those two circles. Strict floating in the full configuration additionally requires every other pair constraint involving that circle to be slack.
\end{definition}

\section{The optimal cyclic order is a fixed Supnick tour}\label{sec:theoremA}

\begin{lemma}[Supermodularity]\label{lem:supermodular}
For every $R>0$ the matrix $\big(\theta_R(r_i,r_j)\big)$, rows and columns indexed by the radii in increasing order, is symmetric strictly anti-Monge: for $a<a'$ and $b<b'$,
\[
\theta_R(a,b)+\theta_R(a',b') \;>\; \theta_R(a,b')+\theta_R(a',b).
\]
\end{lemma}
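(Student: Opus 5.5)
Since $\theta_R(a,b)$ is a smooth function of two real variables on $(0,\infty)^2$, strict anti-Monge (strict supermodularity) on the grid follows once the mixed partial derivative is strictly positive everywhere. Integrating over the rectangle gives
\[
\theta_R(a,b)+\theta_R(a',b')-\theta_R(a,b')-\theta_R(a',b)=\int_a^{a'}\!\!\int_b^{b'}\partial_x\partial_y\theta_R(x,y)\,dy\,dx .
\]
Symmetry is already in Lemma~\ref{lem:angular}. So the whole task is to show $\partial_x\partial_y\theta_R(x,y)>0$ for all $x,y,R>0$.

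Write $\theta=2\asin\sqrt{s}$ with $s=u(x)u(y)$ and $u(t)=t/(R+t)$. Note $u$ is increasing with $u'(t)=R/(R+t)^2$ and $u\in(0,1)$. Set $g(s)=2\asin\sqrt s$. Then
\[
\partial_x\partial_y\theta = u'(x)u'(y)\,\big[g'(s)+s\,g''(s)\big] = u'(x)u'(y)\,\frac{d}{ds}\big(s\,g'(s)\big).
\]
We have $g'(s)=1/\sqrt{s(1-s)}$, so $s\,g'(s)=\sqrt{s/(1-s)}$. This is strictly increasing on $(0,1)$, and its derivative equals $\tfrac12 s^{-1/2}(1-s)^{-3/2}>0$. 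Since $u'>0$ and $s\in(0,1)$ (as shown in Lemma~\ref{lem:angular}), the mixed partial is strictly positive.

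The only step needing care is organizing the chain rule so that the multiplicative structure $s=u(x)u(y)$ collapses the mixed derivative to $\frac{d}{ds}(s g'(s))$. A direct brute-force differentiation of $\asin\sqrt{ab/((R+a)(R+b))}$ would obscure the sign. As an alternative check, one could use the log substitution $\alpha=\log u(x)$, $\beta=\log u(y)$, under which $\theta=G(\alpha+\beta)$ with $G$ convex. Supermodularity of a convex function of a sum under monotone reparametrization is classical, and convexity of $G(w)=2\asin e^{w/2}$ is exactly the monotonicity of $\sqrt{s/(1-s)}$.

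Finally, the mixed partial is continuous and positive, and the rectangle $[a,a']\times[b,b']$ has positive area. Hence the integral is strictly positive, giving the strict inequality for the radii $r_i$ ordered increasingly.
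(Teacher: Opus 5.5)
Your proof is correct and is essentially the paper's argument: both write the quantity under the $\arcsin$ as a product of strictly increasing functions of each radius, then show the mixed second partial derivative is strictly positive. The difference is only bookkeeping. You differentiate in the original variables via the chain rule and integrate over the rectangle $[a,a']\times[b,b']$, whereas the paper computes $\partial_x\partial_y \arcsin(xy)=(1-x^2y^2)^{-3/2}$ and then uses the fact that supermodularity is preserved under increasing reparametrizations of each coordinate.
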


\begin{proof}
Write $\theta_R(a,b)=2\asin(t_a t_b)$ with $t_x=\sqrt{x/(R+x)}\in(0,1)$, which is strictly increasing in $x$. For $f(x,y)=\asin(xy)$ on $(0,1)^2$,
\[
\frac{\partial^2 f}{\partial x\,\partial y}
=\frac{\partial}{\partial y}\,\frac{y}{\sqrt{1-x^2y^2}}
=\frac{(1-x^2y^2)+x^2y^2}{(1-x^2y^2)^{3/2}}
=(1-x^2y^2)^{-3/2}>0,
\]
so $f$ is strictly supermodular; supermodularity is preserved by strictly increasing reparametrizations of each coordinate, and the claim follows.
\end{proof}

Equivalently, $-\theta_R$ is a Supnick matrix --- symmetric Monge off the diagonal, diagonal entries being irrelevant for tours (cf.\ \cite[Prop.~2.13]{bdvvw}). Supnick's theorem \cite{supnick} covers both extreme tours; in the form stated in \cite{bdvvw} (Theorem~2.5 for the minimum, and the max-TSP discussion following Theorem~2.12, both attributed there to \cite{supnick}), the minimum TSP tour on Supnick matrices is $\langle 1,3,5,\dots,n,\dots,6,4,2\rangle$ and the \emph{maximum} TSP tour is
\[
\sigma^\ast=\langle 1,\; n-1,\; 3,\; n-3,\; 5,\; n-5,\;\dots,\; n-2,\; 2,\; n\rangle,
\]
\emph{independently of the matrix entries}. Maximizing $\sum(-\theta_R)$ is minimizing $\sum\theta_R$, hence:

\begin{theorem}[Optimal order]\label{thm:A}
Let $n\ge3$ and let $r_1<\dots<r_n$ be arbitrary positive radii.
\begin{itemize}
\item[(i)] For every $R>0$, among all cyclic orders the tour $\sigma^\ast$ minimizes $\sum_i\theta_R(r_{\sigma(i)},r_{\sigma(i+1)})$.
\item[(ii)] $\sigma^\ast$ minimizes $R_{\mathrm{chain}}$ over all cyclic orders; hence $R^\ast\ge R_{\mathrm{chain}}(\sigma^\ast)$ unconditionally.
\item[(iii)] If the $\sigma^\ast$-necklace at $R_{\mathrm{chain}}(\sigma^\ast)$ is geometrically realizable (no non-adjacent pair violated), then it is a global optimum: $R^\ast=R_{\mathrm{chain}}(\sigma^\ast)$.
\end{itemize}
\end{theorem}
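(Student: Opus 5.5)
The plan is to reduce all three parts to Lemma~\ref{lem:supermodular}, Supnick's theorem, and the monotonicity facts in Lemma~\ref{lem:angular}, taking the parts in order.

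\textbf{Part (i).} Fix $R>0$ and set $C=-\big(\theta_R(r_i,r_j)\big)_{i,j}$, with rows and columns indexed by increasing radius. Lemma~\ref{lem:supermodular} shows that $C$ is symmetric and strictly Monge off the diagonal. Tours never use diagonal entries, so we may redefine them freely, for instance as in \cite[Prop.~2.13]{bdvvw}, to make $C$ a genuine Supnick matrix. Supnick's theorem, in the maximum-TSP form, then says that $\sigma^\ast$ maximizes $\sum_i C_{\sigma(i)\sigma(i+1)}$ over all cyclic orders. This is the same as minimizing $\sum_i\theta_R$. The tour $\sigma^\ast$ depends only on $n$ and not on the entries of $C$, so the same tour works for every $R$. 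The only real check in this step is that the index convention (radii sorted increasingly) matches the convention in \cite{bdvvw}. The theorem also holds under reversal of the index order, so a mismatch would at worst change the labeling of the same tour.

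\textbf{Part (ii).} This is where the variable $R$ enters. For a cyclic order $\sigma$, write $S_\sigma(R)=\sum_i\theta_R(\sigma_i,\sigma_{i+1})$. By Lemma~\ref{lem:angular}, $S_\sigma$ is continuous and strictly decreasing in $R$. By definition, $R_{\mathrm{chain}}(\sigma)$ is the unique root of $S_\sigma(R)=2\pi$.

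Take any $\sigma$ and put $R_\sigma=R_{\mathrm{chain}}(\sigma)$. Part (i) at this fixed radius gives
\[
S_{\sigma^\ast}(R_\sigma)\le S_\sigma(R_\sigma)=2\pi=S_{\sigma^\ast}(R_{\mathrm{chain}}(\sigma^\ast)).
\]
Since $S_{\sigma^\ast}$ is strictly decreasing, this forces $R_{\mathrm{chain}}(\sigma^\ast)\le R_\sigma$.

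For the bound on $R^\ast$, start from a feasible configuration at radius $R$. Read off its cyclic order $\sigma$ by sorting the angles. The consecutive angular gaps sum to $2\pi$. Each gap is at least $\theta_R$ of its pair: when the gap is at most $\pi$ this is Lemma~\ref{lem:angular} directly, and when the gap exceeds $\pi$ it is trivial because $\theta_R<\pi$. Hence $S_\sigma(R)\le 2\pi$, which gives $R\ge R_{\mathrm{chain}}(\sigma)\ge R_{\mathrm{chain}}(\sigma^\ast)$. Taking the infimum over feasible configurations yields $R^\ast\ge R_{\mathrm{chain}}(\sigma^\ast)$. The one place to be careful is coincident or ordered angles with gap larger than $\pi$, and the case split above handles it.

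\textbf{Part (iii).} Place the circles of $\sigma^\ast$ at $R_0=R_{\mathrm{chain}}(\sigma^\ast)$ so that consecutive gaps equal $\theta_{R_0}$ exactly; these gaps sum to $2\pi$. Adjacent pairs are then tangent, and the hypothesis says no non-adjacent pair overlaps, so the configuration is feasible. Therefore $R^\ast\le R_0$, and combined with part (ii) this gives equality.

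I expect the main obstacle to be the transfer in part (ii) from the fixed-$R$ statement to the variable-$R$ one. The approach above settles it by comparing both chain sums at the single radius $R_{\mathrm{chain}}(\sigma)$ and then using strict monotonicity of $S_{\sigma^\ast}$.
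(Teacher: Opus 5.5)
Your proposal is correct and follows the paper's proof essentially step for step. Part (i) comes from Lemma~\ref{lem:supermodular} together with Supnick's max-tour theorem applied to $-\theta_R$. Part (ii) compares both chain sums at the single radius $R_{\mathrm{chain}}(\sigma)$ and uses strict monotonicity of the $\sigma^\ast$ sum. Part (iii) shows the bound is attained. Your extra details, namely the gap-larger-than-$\pi$ case in the feasibility-to-chain bound and the explicit tangent placement in (iii), only make explicit what the paper leaves implicit in its definition of $R_{\mathrm{chain}}$ and its one-line proof of (iii).
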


\begin{proof}
(i) is Lemma~\ref{lem:supermodular} plus Supnick's theorem. (ii): fix any order $\sigma$ and let $R=R_{\mathrm{chain}}(\sigma)$. By (i), $\sum\theta_{R}(\sigma^\ast)\le\sum\theta_{R}(\sigma)=2\pi$; since $\sum\theta_R(\sigma^\ast)$ is strictly decreasing in $R$ and equals $2\pi$ at $R_{\mathrm{chain}}(\sigma^\ast)$, we get $R_{\mathrm{chain}}(\sigma^\ast)\le R$. The unconditional bound follows since every feasible $(\sigma,R)$ has $R\ge R_{\mathrm{chain}}(\sigma)$. (iii): realizability makes the lower bound attained.
\end{proof}

\begin{remark}
Theorem~\ref{thm:A} holds for \emph{arbitrary} distinct radii, not only $1,\dots,n$; the arithmetic sequence enters only in the realizability analysis below. For $n=10$ the lower bound of (ii) is $\approx 9.90$, the value observed in \cite{mse} when circle $1$ is (incorrectly) kept in the chain; the true optimum is $9.9799\ldots$.
\end{remark}

\begin{proposition}[Maximum chain radius]\label{prop:worst}
For every $R$ the cyclic order maximizing $\sum_i\theta_R$ is the complementary Supnick tour $\sigma_\dagger=\langle1,3,5,\dots,n,\dots,6,4,2\rangle$; consequently $\sigma_\dagger$ maximizes $R_{\mathrm{chain}}$. For $3\le n\le13$, the reported fixed-order checks find its necklace realizable; Table~\ref{tab:worst} records the corresponding numerical values. Thus it supplies the reported maximum among realizable adjacent-chain necklaces in that finite range. This does not assert that $\sigma_\dagger$ maximizes $R_{\mathrm{full}}$ over all cyclic orders.
\end{proposition}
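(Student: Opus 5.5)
The plan follows the proof of Theorem~\ref{thm:A}, with maximization and minimization interchanged; the finite realizability part is computational. First, Lemma~\ref{lem:supermodular} says that for each fixed $R>0$ the matrix $(\theta_R(r_i,r_j))$ is symmetric strictly anti-Monge. Hence $(-\theta_R)$ is a Supnick matrix, with the diagonal irrelevant as remarked after the lemma. Maximizing $\sum_i\theta_R(\sigma_i,\sigma_{i+1})$ over cyclic orders is the same as minimizing $\sum_i(-\theta_R)(\sigma_i,\sigma_{i+1})$. By the minimum-TSP half of Supnick's theorem (Theorem~2.5 in \cite{bdvvw}), this minimum is attained by the fixed tour $\sigma_\dagger=\langle 1,3,5,\dots,n,\dots,6,4,2\rangle$, whatever the entries are. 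This settles the first claim for every $R$ at once.

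Second, I would transfer this to $R_{\mathrm{chain}}$ by the self-consistency argument of Theorem~\ref{thm:A}(ii) run in reverse. Fix any order $\sigma$ and set $R=R_{\mathrm{chain}}(\sigma)$. The first step gives $\sum\theta_R(\sigma_\dagger)\ge\sum\theta_R(\sigma)=2\pi$. The map $R'\mapsto\sum\theta_{R'}(\sigma_\dagger)$ is strictly decreasing by Lemma~\ref{lem:angular}, and it equals $2\pi$ exactly at $R_{\mathrm{chain}}(\sigma_\dagger)$. Therefore $R_{\mathrm{chain}}(\sigma_\dagger)\ge R_{\mathrm{chain}}(\sigma)$. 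This is the proved, chain-level content of the proposition.

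The remaining sentences are of a different kind, and I would keep them clearly separated. For $3\le n\le 13$, compute $R_{\mathrm{chain}}(\sigma_\dagger)$ with the 50-digit verifier, place the necklace by accumulating adjacent gaps $\theta_R$, and check every non-adjacent pair $(i,j)$. The check is that the shorter circular angular separation is at least $\theta_R(r_i,r_j)$, or equivalently that the fixed-order STN at that $R$ is feasible. These values would be recorded in Table~\ref{tab:worst}.

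Given realizability, $R_{\mathrm{full}}(\sigma_\dagger)=R_{\mathrm{chain}}(\sigma_\dagger)$. For any realizable chain necklace of another order $\sigma$, its radius $R_{\mathrm{chain}}(\sigma)$ is at most $R_{\mathrm{chain}}(\sigma_\dagger)$ by the second step, and $\sigma_\dagger$ is itself realizable, so it attains the maximum among realizable adjacent-chain necklaces. This does not extend to a statement about $R_{\mathrm{full}}$: for an order whose chain necklace is not realizable, $R_{\mathrm{full}}(\sigma)$ can exceed $R_{\mathrm{chain}}(\sigma)$, and nothing bounds it by $R_{\mathrm{chain}}(\sigma_\dagger)$. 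I would state that disclaimer explicitly rather than prove anything there.

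I expect the main obstacle to be the finite realizability check rather than the Supnick part. The chain at $R_{\mathrm{chain}}(\sigma_\dagger)$ must have every non-adjacent slack nonnegative, and a slack near zero would need a tolerance argument. My first approach would be interval evaluation of each slack at the 50-digit root, together with a check that the minimal slack exceeds the stated guard. Near-tangent pairs across the seam between $n$ and its neighbours $n-1$ and $n-2$ would get separate attention.
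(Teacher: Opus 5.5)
Your proposal is correct and matches the argument the paper relies on; the paper gives no separate proof block for this proposition. The three ingredients are: Supnick's theorem (its minimum-tour half) applied to the Supnick matrix $-\theta_R$, giving $\sigma_\dagger$ for every $R$; the self-consistency argument of Theorem~\ref{thm:A}(ii) with inequalities reversed, giving maximality of $R_{\mathrm{chain}}(\sigma_\dagger)$; and realizability for $3\le n\le13$ as a reported fixed-order numerical check (Table~\ref{tab:worst}), not a proved statement. One small aside on your last paragraph: the seam at circle $n$ needs no special attention, because any circle $c$ with a neighbour $x<c$ separates $x$ from its other neighbour $y$ automatically ($\theta_R(x,c)+\theta_R(c,y)>\theta_R(c,y)>\theta_R(x,y)$), so the only one-step non-adjacent check in $\sigma_\dagger$ not settled by monotonicity is circle $1$ between $2$ and $3$.
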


\section{Exact algorithm and certification}\label{sec:algo}

\paragraph{Feasibility oracle.} Fix a cyclic order and $R$. Unrolling the circle at one circle ($\varphi_0=0$) gives, for every pair $i<j$ of positions, the two constraints $\varphi_j-\varphi_i\ge\theta_R(r_i,r_j)$ and $\varphi_j-\varphi_i\le2\pi-\theta_R(r_i,r_j)$. This is a simple temporal network; feasibility is the absence of a negative cycle in the associated distance graph and is decided by Floyd--Warshall in $O(n^3)$. The minimum feasible $R$ for the order, $R_{\mathrm{full}}(\sigma)$, is found by bisection (the constraint set relaxes monotonically in $R$), bracketed below by $R_{\mathrm{chain}}(\sigma)$. A witness configuration is recovered from shortest-path potentials and verified independently in Cartesian coordinates.

\paragraph{Lower bounds and search.} For the search over orders we use
\[
\mathrm{LB}(\sigma)=\max\big\{R_{\mathrm{chain}}(\sigma),\,R_{\mathrm{chain}}(\sigma\setminus\{1\}),\,R_{\mathrm{chain}}(\sigma\setminus\{1,2\})\big\},
\]
where $\sigma\setminus S$ denotes the induced cyclic order, and terms with fewer than three surviving radii are omitted. Each retained term is valid because every induced adjacency is one of the full pairwise constraints on the original radii. Orders are enumerated canonically (largest radius pinned, reflections halved), $\mathrm{LB}$ is computed vectorized for all $(n-1)!/2$ orders, and $R_{\mathrm{full}}$ is evaluated in ascending $\mathrm{LB}$ until the next bound meets the incumbent: this certifies global optimality up to an absolute guard of $10^{-10}$ in $R$. The historical error budget and calibration supporting this numerical guard are: bisections run to relative tolerance $10^{-13}$; one bound evaluation is an $n$-term sum of $\arcsin$ values, with relative error $O(n\,\varepsilon_{64})$; and the float64 bounds were calibrated against 50-digit recomputation on random samples of orders, with maximal observed deviation $1.8\times10^{-14}$ (over $10^5$ random orders for each $8\le n\le14$; maximal overestimate $1.5\times10^{-14}$), more than three orders of magnitude below the guard. These numerical certificates retain the original floating-point error guards: saved-frontier verification checks retained orders and coverage summaries, not every excluded order with directed interval arithmetic. Independently of the search code, the repository ships a standalone verifier (\texttt{verify.py}) that re-derives each claimed optimum and re-checks every order on the pruning frontier (between $1$ and $11$ orders per $n$) in 50-digit arithmetic. Optimal binding structures were re-verified in 50-digit arithmetic (mpmath); the reported high-precision incumbent reconstructions have essential residual slack below $10^{-40}$. Extra printed digits describe those reconstructions; the global optimum guarantee remains the stated $10^{-10}$ tolerance. The $n=13$ certification enumerates $12!/2\approx2.4\times10^8$ orders ($\approx1.8$\,h on a consumer 8-core laptop); $n=14$ enumerates $13!/2\approx3.1\times10^9$ orders ($\approx22$\,h). Code and artifacts: \url{https://github.com/falker47/ringmin}.

\paragraph{Essential binding structure.} At the reported fixed order and radius, the verifier calls a pairwise constraint \emph{essential} when increasing its required separation by $10^{-9}$ makes the STN infeasible. To verify that a circle admits a strictly floating placement, it separately increases \emph{all} pairwise separation requirements involving that circle by $10^{-9}$ and checks feasibility. The absence of reported essential pairs is a consistency check, not a substitute for this joint slack test. These are existence statements at the recorded numerical tolerances, not assertions that a circle is slack in every optimal placement or in the plotted witness.

\paragraph{Reproducing the saved certificates.} From the repository root, restore the byte-preserved historical progress logs and run:
\begin{verbatim}
python scripts/frontier_logs.py restore
python verify.py --start 3 --stop 14
\end{verbatim}
The option \texttt{--skip-frontier} is only a smoke check. Rechecking the saved evidence does not rerun exhaustive search generation, and local success does not imply hosted CI or external acceptance.

\section{Certified optima for \texorpdfstring{$3\le n\le14$}{3 <= n <= 14} and finite regimes}\label{sec:results}

\begin{table}[ht]\centering
\caption{Reported certified optima with global radius guard $10^{-10}$; incumbent reconstructions are rounded to 12 decimals (the unchanged 30-digit table is reproduced in the Appendix). Each cycle is one reported optimal order. The floating column records the existence of strict-slack placements; its displayed hosting gap is not unique. ``Essential pair'' identifies the reported non-adjacent boundary pair bridged by a listed floater.}
\label{tab:main}
\small
\begin{tabular}{rllll}
\toprule
$n$ & $R^\ast(n)$ & optimal cycle & floating & essential pair \\
\midrule
3 & 0.260869565217 ($=6/23$) & (3,1,2) & --- & --- \\
4 & 0.844453589561 & (4,1,3,2) & --- & --- \\
5 & 1.695494081203 & (5,1,4,3,2) & --- & --- \\
6 & 2.794919518897 & (6,1,5,3,4,2) & --- & --- \\
7 & 4.153189553744 & (7,1,6,3,4,5,2) & --- & --- \\
8 & 5.767794284590 & (8,1,6,4,5,3,7,2) & $\{1\}$ & $(8,6)$ \\
9 & 7.726726552611 & (9,2,8,1,5,6,4,7,3) & $\{1\}$ & $(8,5)$ \\
10 & 9.979907385863 & (10,2,9,4,7,1,6,5,8,3) & $\{1\}$ & $(7,6)$ \\
11 & 12.488720487188 & (11,2,10,4,8,6,7,1,5,9,3) & $\{1\}$ & $(7,5)$ \\
12 & 15.258870430448 & (12,2,11,4,9,6,7,8,5,1,10,3) & $\{1\}$ & $(5,10)$ \\
13 & 18.317563047217 & (13,3,1,12,2,10,6,8,7,9,5,11,4) & $\{1\}$ & $(3,12)$ \\
14 & 21.665395182215 & (14,3,13,2,9,8,7,10,6,11,5,1,12,4) & $\{1,2\}$ & $(13,9),\,(5,12)$ \\
\bottomrule
\end{tabular}
\end{table}

For $3\le n\le7$ a reported optimal cycle is $\sigma^\ast$ on $\{1,\dots,n\}$ (regime I), as guaranteed by Theorem~\ref{thm:A}(iii) once realizability is checked. The transitions are governed by two computable properties of the Supnick tour on the reduced set $\{2,\dots,n\}$:

\begin{proposition}[First breakdown]\label{prop:break1}
In the $\sigma^\ast$-necklace on $\{1,\dots,n\}$, circle $1$ sits between $n$ and $n-1$. The realizability inequality
$\theta_R(n,1)+\theta_R(1,n-1)\ge\theta_R(n,n-1)$ at $R=R_{\mathrm{chain}}(\sigma^\ast)$
holds for $3\le n\le7$ and fails for $8\le n\le 13$ (violated pair $(n,n-1)$; at $n=13$ also $(13,11)$ across circle $2$). Hence regime I ends at $n=7$. The continuation to every $n\ge8$, conjectural in v1, is now a fixed-order theorem \cite{fixedseams}. It does not by itself classify floating circles in global optima.
\end{proposition}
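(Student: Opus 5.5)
The statement has three parts: a combinatorial claim about where circle $1$ sits in $\sigma^\ast$, a finite numerical claim for $3\le n\le 13$, and a citation for $n\ge8$ beyond the certified range. I will treat them in that order.

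\emph{Position of circle 1.} Read off $\sigma^\ast=\langle 1,n-1,3,n-3,\dots,n-2,2,n\rangle$ cyclically. Circle $1$ is preceded (wrapping around) by $n$ and followed by $n-1$, so its neighbours are $n$ and $n-1$. The pattern of the tour, odd indices ascending from $1$ interleaved with $n-1,n-3,\dots$ descending, has to be checked for both parities of $n$. In each case the tour ends with $2,n$ and starts with $1,n-1$. This also fixes the neighbours of circle $2$ as $n-2$ and $n$, which is the pair relevant at $n=13$. For $n=13$ the tour is $\langle 1,12,3,10,5,8,7,6,9,4,11,2,13\rangle$, so circle $2$ lies between $11$ and $13$, consistent with the reported violated pair $(13,11)$.

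\emph{Finite verification.} For each $n$ I compute $R_n:=R_{\mathrm{chain}}(\sigma^\ast)$ by bisection on the strictly decreasing sum from the Definition. I then evaluate the seam defect
\[
D_n=\theta_{R_n}(n,1)+\theta_{R_n}(1,n-1)-\theta_{R_n}(n,n-1).
\]
The claim is $D_n>0$ for $n\le7$ and $D_n<0$ for $8\le n\le13$, with $D_n$ read at margins well above the float error. To make this rigorous I would enclose $R_n$ in an interval where the chain sum straddles $2\pi$, using interval arithmetic on $\arcsin$. Then I bound $D_n$ over that interval, using the monotonicity of Lemma~\ref{lem:angular} in $R$ for each term.

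By Descartes' pocket criterion, $D_n\ge0$ is equivalent to circle $1$ fitting in the pocket between $n$ and $n-1$. This gives the alternative check $1\le 1/k_{\mathrm{pocket}}$, which is cheap to evaluate in high precision. The same procedure applied to the pair $(13,11)$ across circle $2$ at $n=13$ gives the second violated pair.

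Since any violated non-adjacent pair makes the $\sigma^\ast$-necklace unrealizable, regime I ends exactly at $n=7$. Up to $n=7$, Theorem~\ref{thm:A}(iii) applies once full realizability is also checked, via the STN oracle of Section~\ref{sec:algo}, for all non-adjacent pairs and not just the seam.

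\emph{Main obstacle.} The delicate step is $n=7$ versus $n=8$, where $D_n$ changes sign. I expect $|D_n|$ to be far larger than the $10^{-10}$ tolerances, but this must be confirmed in 50-digit arithmetic rather than assumed. The statement for all $n\ge8$ is not proved here; it is deferred to the cited fixed-order theorem \cite{fixedseams}. The final sentence of the statement is only a caveat and needs no proof, since the global optimum may use a different order, as Table~\ref{tab:main} shows.
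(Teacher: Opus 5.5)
Your main route is the same as the paper's. There the proposition rests only on fixed-order numerical evaluation of the seam inequality at $R_{\mathrm{chain}}(\sigma^\ast)$ for $3\le n\le 13$, which the epistemic ledger classes as numerical evidence. All $n\ge 8$ are delegated to the cited fixed-order seam theorem. Your bisection plus interval enclosure of $D_n$, using the monotonicity in $R$ from Lemma~\ref{lem:angular}, is a more rigorous version of that same check. Your reading of the neighbours of circles $1$ and $2$ is correct. So is your remark that regime I for $n\le 7$ needs a full all-pairs STN check and not just the seam.

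The one real error is the Descartes cross-check, which you state backwards. By the paper's pocket definition, a circle of radius $\rho$ fits between tangent circles $a,b$ iff $\theta_R(a,\rho)+\theta_R(\rho,b)\le\theta_R(a,b)$. With $\rho=1$ and $(a,b)=(n,n-1)$ this says $D_n\le 0$, not $D_n\ge 0$. The correct equivalence is therefore $D_n\ge 0 \iff 1\ge 1/k_{\mathrm{pocket}}$. Realizability means circle $1$ is at least as large as the $(n,n-1)$ pocket and so keeps $n$ and $n-1$ apart. Breakdown ($D_n<0$) is exactly circle $1$ fitting strictly inside that pocket, which is the mechanism the paper describes.

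Run as you wrote it, the test would invert every verdict. At the chain radius, the $(7,6)$ pocket for $n=7$ is roughly $0.85<1$ and the $(8,7)$ pocket for $n=8$ is roughly $1.05>1$. Your criterion would therefore call $n=7$ broken and $n=8$ realizable. The pocket test is only auxiliary and your direct evaluation of $D_n$ is sound, so reversing that one inequality repairs the proposal.
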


\begin{proposition}[Free vs.\ paid floater]\label{prop:break2}
Let $\rho_{\max}(n)$ be the largest Descartes pocket of the $\sigma^\ast$-necklace on $\{2,\dots,n\}$ at its chain radius. The reported numerical values for $n=8,\dots,12$ are
\[
\rho_{\max}\approx0.858,\,0.988,\,1.139,\,1.262,\,1.409.
\]
For $n=10,11,12$ one has $\rho_{\max}>1$, and the separate all-pairs checks admit a strictly floating placement of circle $1$. The reported full radii agree with
$R_{\mathrm{chain}}\big(\sigma^\ast\text{ on }\{2,\dots,n\}\big)$ within the stated certification tolerance (regime III). The pocket comparison alone tests only its two boundary circles.
For $n=8,9$, $\rho_{\max}<1$: the reported optimum distorts the reduced tour to open a unit pocket, at extra cost $\delta(8)\approx0.04080$, $\delta(9)\approx0.00532$ over the (then unattainable) reduced chain bound (regime II), with the same numerical certificate scope.
\end{proposition}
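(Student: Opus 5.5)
The statement to prove is Proposition~\ref{prop:break2}. It is essentially a finite numerical report, so the plan is to separate what follows from the exact model from what needs certified computation.

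\emph{Step 1: compute the pocket values.} For each $n\in\{8,\dots,12\}$, form $\sigma^\ast$ on $\{2,\dots,n\}$. Theorem~\ref{thm:A} applies because it holds for arbitrary radii. Solve $\sum\theta_R=2\pi$ for $R_{\mathrm{chain}}$; the left side is strictly decreasing, so bisection with interval arithmetic is rigorous. For each adjacent pair $(a,b)$, evaluate $1/k_{\mathrm{pocket}}$ from the Descartes formula and take the maximum. This gives enclosures of $\rho_{\max}(n)$ tight enough to separate it from $1$. The equivalence in the pocket definition lets us phrase the comparison as $\theta_R(a,1)+\theta_R(1,b)$ versus $\theta_R(a,b)$, which avoids the square-root formula if that is more convenient.

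\emph{Step 2: regime III, $n=10,11,12$.} Take the pair $(a,b)$ attaining $\rho_{\max}>1$. Place circle $1$ inside that pocket with strict slack against $a$ and $b$. Then check every remaining constraint $\theta_R(1,c)$ by an explicit STN feasibility test in which all separations involving circle $1$ are increased by a small margin. This is exactly the joint slack test of Section~\ref{sec:algo}.

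Once that test succeeds, the necklace on $\{2,\dots,n\}$ together with the floating circle $1$ is a feasible configuration at $R_{\mathrm{chain}}(\sigma^\ast\text{ on }\{2,\dots,n\})$. For the matching lower bound, every full configuration induces a cyclic order on $\{2,\dots,n\}$ whose adjacent constraints are genuine pairwise constraints. By Theorem~\ref{thm:A}(ii) applied to the radii $2,\dots,n$, this gives $R^\ast(n)\ge R_{\mathrm{chain}}(\sigma^\ast\text{ on }\{2,\dots,n\})$. The two bounds coincide, so the optimum is identified exactly, not merely within tolerance. The only remaining issue is the non-adjacent realizability of the reduced necklace, which is itself a finite STN check.

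\emph{Step 3: regime II, $n=8,9$.} Here $\rho_{\max}<1$, so circle $1$ fits in no gap of the reduced $\sigma^\ast$-necklace at its chain radius. Any configuration with $R$ equal to the reduced chain bound must be exactly that necklace, with all adjacent pairs tight: equality in Theorem~\ref{thm:A}(ii) forces a minimizing order, and here it forces $\sigma^\ast$ up to reflection, by the strict anti-Monge property. Since circle $1$ fits nowhere, $R^\ast(n)$ is strictly larger than the bound, so $\delta(n)>0$. The numerical values $\delta(8)\approx0.04080$ and $\delta(9)\approx0.00532$ come from the certified branch-and-bound optima in Table~\ref{tab:main}, minus the computed reduced chain radii, with the $10^{-10}$ guard.

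\emph{Main obstacle.} The uniqueness claim in Step 3 is the hard part. Supnick's theorem guarantees that $\sigma^\ast$ is optimal, but strict anti-Monge does not obviously force it to be the \emph{unique} optimal tour. I would first try an exchange argument: any non-$\sigma^\ast$ tour can be improved by a strictly improving 2-opt move. If that fails, I would fall back on the exhaustive enumeration already performed. Otherwise, the finite values in the proposition are certified computations rather than proofs.
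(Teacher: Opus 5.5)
Your plan is sound, but it is organized differently from the paper. The paper gives no separate derivation: the proposition reports the certified computation of Section~\ref{sec:algo} (exhaustive branch-and-bound, the 50-digit verifier, the joint slack test) together with fixed-order numerics for $\rho_{\max}$ and $\delta$. It deliberately claims only that the reported radii agree with $R_{\mathrm{chain}}(\sigma^\ast\text{ on }\{2,\dots,n\})$ within the $10^{-10}$ guard.

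For $n=10,11,12$ you instead use a sandwich. The lower bound is Theorem~\ref{thm:A}(ii) applied to the radii $2,\dots,n$ through the induced order; the paper uses this only implicitly, as the $R_{\mathrm{chain}}(\sigma\setminus\{1\})$ term of its pruning bound. The upper bound is a single fixed-order construction. This removes the enumeration entirely in regime III and upgrades ``within tolerance'' to an exact identity, which is more than the paper asserts.

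To earn ``exactly'', however, feasibility must hold at the exact chain root, where the adjacent cycle of the reduced necklace has zero slack. The margin-inflated STN test of Section~\ref{sec:algo}, run at a numerically reported radius, cannot decide that degenerate case; it only gives $R^\ast(n)\le$ the reported radius. Use rigidity instead. At the root, place $2,\dots,n$ with all adjacent gaps tight; closure is exact by definition of the root. Then verify, on an interval enclosure of the root, that every non-adjacent pair of the reduced necklace and every constraint involving circle $1$ holds with strict slack. The critical non-adjacent pair is the seam $(n,n-1)$ across circle $2$, which is the one that fails at $n=13$.

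In regime II the obstacle you single out is unnecessary. Strict unattainability of the reduced bound already follows from $\delta(8)$ and $\delta(9)$ exceeding the $10^{-10}$ guard by many orders of magnitude. If you want it exactly, a tiny finite check suffices: the other $359$ resp.\ $2519$ cyclic orders on $\{2,\dots,n\}$ have angle sum strictly above $2\pi$ at the reduced root, so no tour-uniqueness theorem is needed. Conversely, only the paper's exhaustive route produces the distorted optimal orders and the values of $\delta(n)$. Regime II therefore rests on the same computational certificate in both treatments; your structural argument adds only an explanation of why $\delta(n)>0$.
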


\begin{proposition}[Second breakdown]\label{prop:break3}
At $n=13$ the Supnick tour on $\{2,\dots,13\}$ is itself unrealizable: circle $2$, seated between $13$ and $12$, violates
$\theta_R(13,2)+\theta_R(2,12)=0.80448<0.83488=\theta_R(13,12)$ at $R=18.3176$.
The reported certified optimum places circle $2$ to the pair $(12,10)$, where the separation constraint holds with slack $6\times10^{-3}$ rad; circle $1$ floats in the $(3,12)$ pocket (radius $1.086$). Thus the mechanism that expelled circle $1$ at $n=8$ reaches circle $2$ at $n=13$ (regime IV).
\end{proposition}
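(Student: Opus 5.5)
This is a finite numerical statement about $n=13$, so the proof is a certified computation organized in three steps. Throughout I treat the reported optimum $R=18.3176\ldots$ of Table~\ref{tab:main} as given from the branch-and-bound certificate of Section~\ref{sec:algo}.

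\textbf{Step 1: the Supnick tour on $\{2,\dots,13\}$ is unrealizable.} First I write out $\sigma^\ast$ on the reduced set $\{2,\dots,13\}$, i.e.\ the Supnick pattern applied to the relabelled radii $2<3<\dots<13$. Its last positions give the seam $\dots,12,2,13$, so circle $2$ sits between $13$ and $12$, exactly as circle $1$ sits between $n$ and $n-1$ in Proposition~\ref{prop:break1}. Next I evaluate the three angles with the closed formula of Lemma~\ref{lem:angular}, $\theta_R(a,b)=2\asin\sqrt{ab/((R+a)(R+b))}$. The evaluation point should be the stated $R=18.3176$; I would also check it at the reduced chain radius $R_{\mathrm{chain}}(\sigma^\ast|_{\{2..13\}})$, which is smaller.

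By Lemma~\ref{lem:angular}, $\theta_R$ is decreasing in $R$. At a smaller $R$ all three angles grow, though not uniformly, so the inequality at the chain radius must be verified directly. I would do this in 50-digit arithmetic with an explicit error bound, to confirm $0.80448<0.83488$. The gap of about $3\times10^{-2}$ dwarfs any rounding error, so a crude interval bound on $\asin$ suffices.

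\textbf{Step 2: interpretation.} If the seam inequality fails, then the gaps $13\to2\to12$ cannot both be at least their minima while $13$ and $12$ are also separated by $\theta_R(13,12)$. Summing the chain constraints as in the definition of $R_{\mathrm{chain}}$, the reduced necklace cannot close at its chain radius. Hence any optimum must either move circle $2$ out of that seam or change the order. This is the analogue of the mechanism in Proposition~\ref{prop:break1}.

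\textbf{Step 3: verifying the reported configuration.} For the cycle $(13,3,1,12,2,10,6,8,7,9,5,11,4)$, circle $2$ lies between $12$ and $10$. I would check $\theta_R(12,2)+\theta_R(2,10)\le\theta_R(12,10)$ and confirm slack of about $6\times10^{-3}$ in the 50-digit STN witness. Concretely, I would read off the witness angles, compute $\varphi_{10}-\varphi_{12}-\theta_R(12,10)$, and run the essential-pair test to confirm this constraint is non-binding.

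For circle $1$, I would compute the Descartes pocket of the adjacent pair $(3,12)$ at $R$ from the curvature formula, obtaining radius $\approx1.086>1$. I would then run the all-pairs $10^{-9}$ slack test from Section~\ref{sec:algo} to show that circle $1$ is strictly floating, not merely slack against $3$ and $12$.

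\textbf{Main obstacle.} The hard part is not the arithmetic but scope. The assertion that the optimum places circle $2$ between $12$ and $10$ describes the reported optimal cycle, and its global optimality rests entirely on the exhaustive $12!/2$ search with its $10^{-10}$ guard. The proof can therefore only certify these structural features for the reported witness; it cannot claim them for every optimum.
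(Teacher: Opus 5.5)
Your overall route matches the paper's. The proposition is a fixed-order numerical report resting on the closed form of Lemma~\ref{lem:angular}, the Descartes curvature, and the verifier's STN and joint-slack tests. Your decision to test the seam also at the reduced chain root is a real improvement: near $R\approx18.3$ the defect $\theta_R(13,12)-\theta_R(13,2)-\theta_R(2,12)$ is increasing in $R$, so failure at $R^\ast$ does not by itself imply failure at the smaller root.

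\textbf{Step~3 checks the wrong inequality.} At $R=18.3176$ one finds $\theta_R(12,2)\approx0.3974$, $\theta_R(2,10)\approx0.3751$ and $\theta_R(12,10)\approx0.7663$, so $\theta_R(12,2)+\theta_R(2,10)-\theta_R(12,10)\approx+0.0061$. Your inequality $\theta_R(12,2)+\theta_R(2,10)\le\theta_R(12,10)$ is therefore false, and the check would fail. The $6\times10^{-3}$ is slack in the realizability inequality of Proposition~\ref{prop:break1}, holding at the new seam. Circle $2$ keeps $12$ and $10$ apart: with $2$ tangent to both, the non-adjacent $(12,10)$ constraint has slack $\approx0.0061$. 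Your version would instead place circle $2$ inside the $(12,10)$ pocket. That is the same failure mode as at $(13,12)$, so it erases exactly the contrast the proposition draws. It also contradicts your own plan to show, via $\varphi_{10}-\varphi_{12}-\theta_R(12,10)$, that this constraint is non-binding.

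\textbf{Step~2 has the same sign confusion.} If $\theta_R(13,2)+\theta_R(2,12)<\theta_R(13,12)$, the two gaps \emph{can} both be at least their minima while $13,12$ are separated by $\theta_R(13,12)$; there is extra room. What fails is both gaps being \emph{equal} to their minima. The correct argument runs as follows. At the reduced chain root the adjacent gaps sum to $2\pi$ and each is at least its minimum, so all are tight. This forces the $13$--$12$ separation to equal $\theta_R(13,2)+\theta_R(2,12)<\theta_R(13,12)$. Your sentence about what any optimum must do does not follow from this, and the proposition does not need it.

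Finally, when you run Step~1, expect $\theta_{18.3176}(13,12)\approx0.8347$ rather than the printed $0.83488$. The sum $0.80448$ does match, and the margin of about $0.03$ is unaffected.
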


\begin{remark}[Structural summary]
For each $3\le n\le14$, the reported optimum has a Supnick-like constrained backbone and the listed floating circles in available pockets. This describes the saved solutions rather than all optimal contact graphs. The observed finite pattern repeats at its next level: circle $2$ first appears in the reported floating set at $n=14$ (Table~\ref{tab:main}) via a \emph{paid} distortion that opens a pocket of radius $2.013$ between circles $13$ and $9$ --- the regime-II mechanism one level up, at extra cost $0.0565$ over the (then unattainable) chain bound of the Supnick tour on $\{3,\dots,14\}$. At $n=16$ the best known value coincides to $10^{-10}$ with the chain value of the Supnick tour on $\{3,\dots,16\}$, the level-2 analogue of regime III. Heuristically, circle $3$ joins by $n=18$ (Table~\ref{tab:heur}). Moreover the Supnick tour on $\{3,\dots,n\}$ itself becomes unrealizable at $n=17$ --- circle $3$ failing at the $(17,16)$ seam with slack $-0.0104$ --- so the \emph{seam-failure} onsets for circles $1,2,3$ are $n=8,13,17$: each \emph{computed} cascade level reproduces the full regime pattern of the previous one. The appearances at $n=8,14,18$ for circles $1,2,3$ concern these reported solutions, with $n=18$ heuristic. They do not determine the number of floating circles in every optimum. The global continuation questions are separated below from the now-proved fixed-order seam theory.
\end{remark}

\begin{remark}[Status of the v1 cascade conjecture]\label{conj:cascade}
The fixed-order part is now proved: for every fixed integer $k\ge1$ the canonical Supnick necklace on $\{k,\dots,n\}$ eventually remains unrealizable at its chain root \cite{fixedseams}. The first three exact seam thresholds are $8,13,17$. This does not resolve the global part. Whether each fixed circle eventually admits a strictly floating placement in some global optimum, the stronger v1 assertion that it is slack in every optimal placement, and indefinite repetition of paid-then-free regimes remain unresolved. No one of these quantifiers is inferred from a fixed-order seam or a saved contact graph.
\end{remark}

\begin{table}[ht]\centering
\caption{Historical best-known arrangements for $15\le n\le18$ (local search, 200--600 restarts per $n$; \emph{non-exhaustive}). They are heuristic feasible upper bounds, not certified global optima; floating labels concern the reported candidates. The last column counts restarts reaching the best value.}
\label{tab:heur}
\small
\begin{tabular}{rlllr}
\toprule
$n$ & best $R$ & floating & cycle & hits \\
\midrule
15 & 25.290151636714 & $\{1,2\}$ & (15,3,14,5,12,2,9,1,8,10,7,11,6,13,4) & 59 \\
16 & 29.194988498141 & $\{1,2\}$ & (16,1,4,14,6,12,8,10,9,2,11,7,13,5,15,3) & 18 \\
17 & 33.394425022062 & $\{1,2\}$ & (17,4,16,3,15,6,1,13,8,11,10,9,2,12,7,14,5) & 19 \\
18 & 37.870755667393 & $\{1,2,3\}$ & (18,4,17,3,14,8,12,10,11,9,2,13,7,15,6,16,1,5) & 9 \\
\bottomrule
\end{tabular}
\end{table}

\section{Correction of the v1 asymptotic conjectures}\label{sec:asymptotics}

For $R\gg n$ one has $\theta_R(a,b)=2\sqrt{ab}/R\,\big(1+O(n/R)\big)$, so the closure equation gives
$R\approx\frac1\pi\sum_i\sqrt{r_ir_{i+1}}$. Since $\sqrt{ab}$ is supermodular, Theorem~\ref{thm:A}(i) applies verbatim: the minimizing order of $\sum\sqrt{r_ir_{i+1}}$ is again $\sigma^\ast$, whose value on $1,\dots,n$ is
\[
\min_\sigma\sum_i\sqrt{r_{\sigma(i)}r_{\sigma(i+1)}}
=\Big(1+o(1)\Big)\int_0^n\sqrt{x(n-x)}\,dx=\frac{\pi n^2}{8}\,(1+o(1)),
\]
because adjacent pairs in $\sigma^\ast$ are $\{k,\,n-k+O(1)\}$ with each odd $k$ contributing twice. This is the adjacent-chain relaxation; it gives no all-pairs-feasible construction at that coefficient. In particular, the v1 argument based on a small number of floating circles did not supply the missing feasibility transfer. The chain calculation therefore does not establish $R^\ast(n)\sim n^2/8$.

\begin{figure}[t]\centering\includegraphics[width=.78\textwidth]{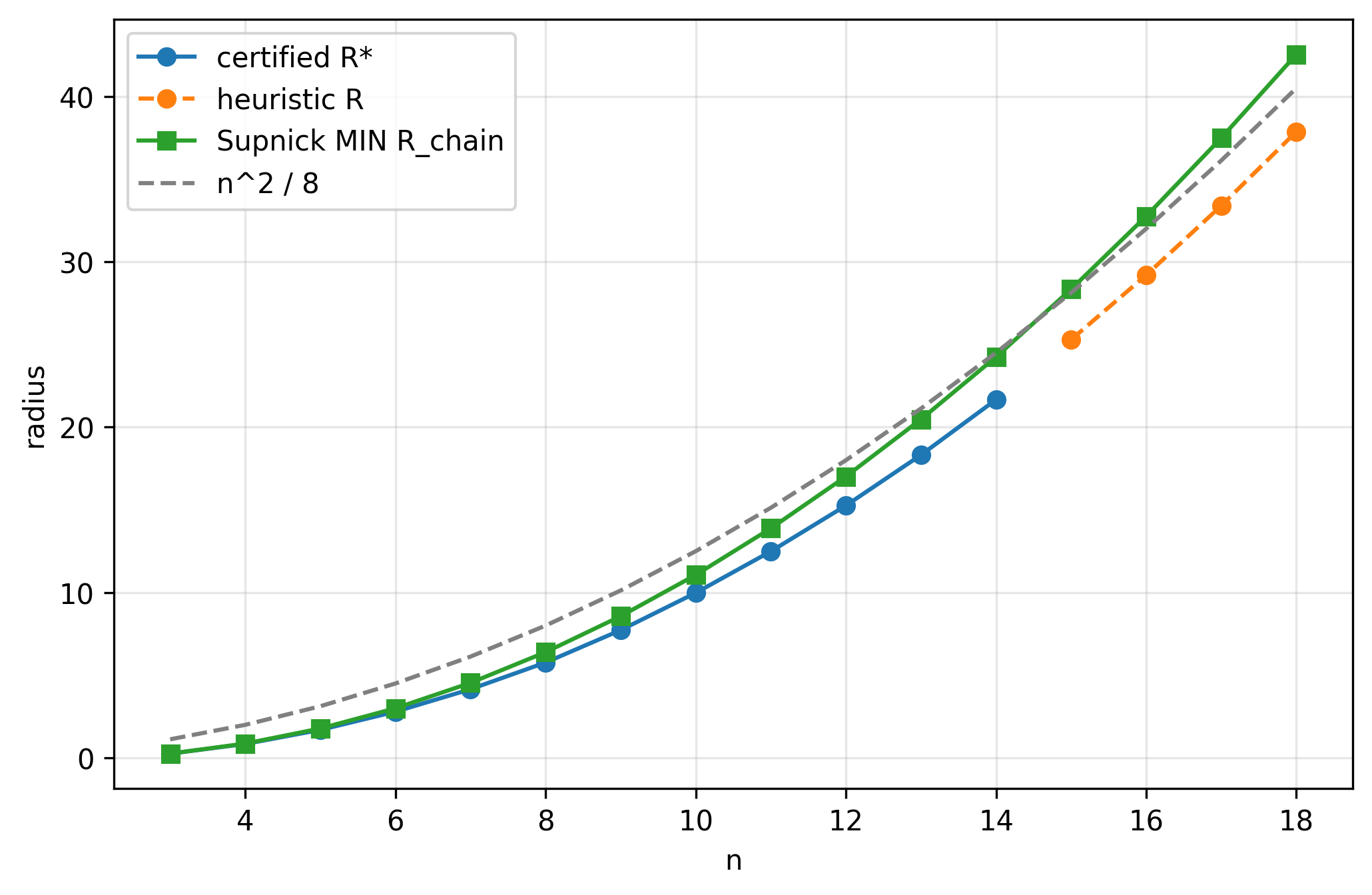}\caption{Unchanged historical finite comparison: reported minima (heuristic upper bounds beyond $n=14$), complementary-tour chain values, and the $n^2/8$ reference curve. The latter is the \emph{disproved v1 conjectural asymptote}, not the true global asymptote. The other comparison is a fixed-order chain curve, not a certificate for maximizing full geometric radii.}\label{fig:radii-vs-n}\end{figure}

\begin{remark}[Disproved v1 asymptotic conjectures]\label{conj:asym}
Version 1 conjectured both $R^\ast(n)=n^2/8\,(1+o(1))$ and the stronger deficit statement $n^2/8-R^\ast(n)=O(\sqrt n)$. Both are false: subsequent work proves a global asymptotic coefficient strictly larger than $1/8$ \cite{sequel}. The historical finite deficits $2.52,\,2.64,\,2.74,\,2.81,\,2.83$ for $n=10,\dots,14$ and $2.83,\,2.81,\,2.73,\,2.63$ for the heuristic $n=15,\dots,18$ remain numerical observations only. The heuristic deficits are lower bounds on the true deficits when their candidate radii are feasible upper bounds. These finitely many values imply no asymptotic rate.
\end{remark}

The standalone sequel \cite{sequel} proves $R^\ast(n)=C_*n^2+o(n^2)$ with $C_*>1/8$ and gives an effective finite-word variational characterization. Its all-pairs reduction, bounds and recovery proofs belong to that separate work. This correction imports their consequence for the former conjectures without reproducing the new theory, and leaves the finite certificates unchanged.

\section{Open problems}

(1) Determine global finite optima beyond $n=14$; the now-proved fixed-order seam classification \cite{fixedseams} does not settle them. (2) Characterize floating circles with explicit quantifiers over optimal placements: distinguish existence of a strict-slack placement from slack in every optimum, and do not infer a growth rate from the finite appearances in Table~\ref{tab:main}. The global continuation questions in Remark~\ref{conj:cascade} remain open. (3) The same machinery applies to radii $r_k=k^\alpha$ or arbitrary sequences: Theorem~\ref{thm:A} is general, the cascade analysis is not. (4) Three dimensions: spheres of radii $1,\dots,n$ tangent to a central sphere --- the ordering structure disappears; what replaces Supnick?

\section*{Acknowledgments}
The problem and the pyramid conjecture are due to Dan \cite{mse}; Rei Henigman's $n=10$ computation provided an independent check. The project used AI assistance for software, experimentation, drafting, and repository organization. Internal checks remain distinct from the author's approval and independent external review. The proof of Theorem~\ref{thm:A} imports the classical Supnick theorem; the later corrections cite their separate sources. The saved finite certificate evidence can be checked with the independent verifier in Section~\ref{sec:algo}, retaining its stated numerical assumptions.

\appendix
\section{High-precision values and figures}
\begin{table}[ht]\centering
\caption{The complementary Supnick tour $\sigma_\dagger$ (Proposition~\ref{prop:worst}): reported numerical chain values and finite necklace realizability. These values do not certify a maximum of $R_{\mathrm{full}}$ over all orders.}
\label{tab:worst}
\small
\setlength{\tabcolsep}{4pt}
\begin{tabular}{lrrrrrrrrrrr}
\toprule
$n$ & 3 & 4 & 5 & 6 & 7 & 8 & 9 & 10 & 11 & 12 & 13 \\
$R_\dagger$ & 0.2609 & 0.8673 & 1.7786 & 3.0001 & 4.5375 & 6.3934 & 8.5690 & 11.0647 & 13.8806 & 17.0164 & 20.4720 \\
\bottomrule
\end{tabular}
\end{table}

The following tables reproduce the original appendix data unchanged. Their extra digits report high-precision fixed-order reconstructions, not a global error smaller than $10^{-10}$. The historical label ``Worst-arrangement comparison'' refers to the complementary Supnick chain tour; the $n>14$ rows are fixed-order comparisons and are not additional global optimum certificates.
\begingroup
\setlength{\tabcolsep}{2pt}
\begin{table}
\centering
\begin{tabular}{rl}
\toprule
$n$ & $R^*$ \\
\midrule
3 & 0.260869565217391304347826086957 \\
4 & 0.844453589560855604347528524674 \\
5 & 1.69549408120271081351328017371 \\
6 & 2.79491951889692485617024406797 \\
7 & 4.15318955374381246513863858202 \\
8 & 5.7677942845896143026361805725 \\
9 & 7.72672655261128921886246177604 \\
10 & 9.97990738586347760966552641468 \\
11 & 12.4887204871876588517468786264 \\
12 & 15.2588704304484933617250043503 \\
13 & 18.3175630472173206282821941532 \\
14 & 21.6653951822145150956462891793 \\
\bottomrule
\end{tabular}
\caption{Certified optima, 30 significant digits.}
\end{table}

\begin{table}
\centering
\begin{tabular}{rlll}
\toprule
$n$ & Supnick min tour & $R_{\mathrm{chain}}$ & $R_{\mathrm{full}}$ \\
\midrule
3 & [1, 3, 2] & 0.260869565217 & 0.260869565217 \\
4 & [1, 3, 4, 2] & 0.867281640255 & 0.867281640255 \\
5 & [1, 3, 5, 4, 2] & 1.778550949884 & 1.778550949884 \\
6 & [1, 3, 5, 6, 4, 2] & 3.000138442323 & 3.000138442323 \\
7 & [1, 3, 5, 7, 6, 4, 2] & 4.537467833562 & 4.537467833562 \\
8 & [1, 3, 5, 7, 8, 6, 4, 2] & 6.393358320114 & 6.393358320114 \\
9 & [1, 3, 5, 7, 9, 8, 6, 4, 2] & 8.568988935955 & 8.568988935955 \\
10 & [1, 3, 5, 7, 9, 10, 8, 6, 4, 2] & 11.064726494858 & 11.064726494858 \\
11 & [1, 3, 5, 7, 9, 11, 10, 8, 6, 4, 2] & 13.880577932626 & 13.880577932626 \\
12 & [1, 3, 5, 7, 9, 11, 12, 10, 8, 6, 4, 2] & 17.016408238128 & 17.016408238128 \\
13 & [1, 3, 5, 7, 9, 11, 13, 12, 10, 8, 6, 4, 2] & 20.472039120681 & 20.472039120681 \\
14 & [1, 3, 5, 7, 9, 11, 13, 14, 12, 10, 8, 6, 4, 2] & 24.247291190092 & 24.247291190092 \\
15 & [1, 3, 5, 7, 9, 11, 13, 15, 14, 12, 10, 8, 6, 4, 2] & 28.342000288518 & 28.342000288518 \\
16 & [1, 3, 5, 7, 9, 11, 13, 15, 16, 14, 12, 10, 8, 6, 4, 2] & 32.756022334220 & 32.756022334220 \\
17 & [1, 3, 5, 7, 9, 11, 13, 15, 17, 16, 14, 12, 10, 8, 6, 4, 2] & 37.489233291550 & 37.489233291550 \\
18 & [1, 3, 5, 7, 9, 11, 13, 15, 17, 18, 16, 14, 12, 10, 8, 6, 4, 2] & 42.541527273519 & 42.541527273519 \\
\bottomrule
\end{tabular}
\caption{Worst-arrangement comparison; Supnick min tour values.}
\end{table}

\endgroup

Per-$n$ configuration figures for all $3\le n\le 18$ are available in the repository.

\end{document}